\documentclass[a4paper,12pt]{elsarticle}
\usepackage[T1]{fontenc}
\usepackage[english]{babel}
\usepackage{ae,aecompl}
\usepackage{amsmath, amsthm}
\usepackage{amssymb}
\usepackage[utf8]{inputenc}
\usepackage[section] {placeins}
\usepackage[ruled, vlined, linesnumbered]{algorithm2e}
\newtheorem {Theorem}                 {Theorem}         [section]

\newtheorem {myalgorithm}    [Theorem]  {Algorithm}

\newtheorem {lemma}        [Theorem]  {Lemma}

\input amssym.def
\input amssym

\usepackage{pgfplots}
\usepackage{verbatim}
\usepackage{subfigure}
\usepackage{tikz}
\usetikzlibrary{shapes,arrows,backgrounds,mindmap}
\usepackage{titlesec}
\journal{arXiv}

\bibliographystyle{elsarticle-num}
\begin{document}
	\begin{frontmatter}
		\title{$2$-edge-twinless blocks}
		\author{Raed Jaberi}
		
		\begin{abstract}  
		Let $G=(V,E)$ be a directed graph. A $2$-edge-twinless block in $G$ is a maximal vertex set $C^{t}\subseteq V$ with $|C^{t}|>1$ such that for any distinct vertices $v,w \in C^{t}$, and for every edge $e\in E$, the vertices $v,w$ are in the same twinless strongly connected component of $G\setminus\left \lbrace e \right\rbrace $. 
		In this paper we study this concept and describe algorithms for computing  $2$-edge-twinless blocks.
		\end{abstract} 
		\begin{keyword}
			Directed graphs \sep Strong bridges  \sep Graph algorithms \sep $2$-blocks \sep Twinless strongly connected graphs 
		\end{keyword}
	\end{frontmatter}
	\section{Introduction}
Let $G=(V,E))$ be a directed graph.
	A \textit{$2$-edge block} in $G$ is a maximal vertex set $C^{e}\subseteq V$ with $|C^{e}|\geq 2$ such that for each pair of distinct vertices $v, w \in C^{e}$, there are two edge-disjoint paths from $v$ to $w$ and two edge-disjoint paths from $w$ to $v$ in $G$. By Menger's Theorem for edge connectivity \cite{Menger27},  there exist two edge-disjoint paths from $v$ to $w$ and two edge-disjoint paths from $w$ to $v$ in $G$ if and only if the vertices $v,w$ belong to the same strongly connected component of $G\setminus\left\lbrace e\right\rbrace $ for any edge $e\in E$. In this paper we introduce and study a new concept the $2$-edge-twinless blocks. A $2$-edge-twinless block in $G$ is a maximal vertex set $C^{t}\subseteq V$ of size at least $2$ such that for any distinct vertices $v,w \in C$, and for every edge $e\in E$, the vertices $v,w$ lie in the same twinless strongly connected component of $G\setminus\left \lbrace e \right\rbrace $. As Figure \ref{fig:2edgetwinlessblocksexample} demonstrates, the vertices of a $2$-edge-twinless block do not necessarily lie in the same $2$-edge block.

\begin{figure}[h]
	\centering
	\scalebox{0.96}{
		\begin{tikzpicture}[xscale=2]
		\tikzstyle{every node}=[color=black,draw,circle,minimum size=0.9cm]
		\node (v1) at (-1.2,3.1) {$1$};
		\node (v2) at (-2.5,0) {$2$};
		\node (v3) at (-0.5, -2.5) {$3$};
		\node (v4) at (0,-0.5) {$4$};
		\node (v5) at (0.5,3.6) {$5$};
		\node (v6) at (0.6,1) {$6$};
		\node (v7) at (2.9,2.5) {$7$};
		\node (v8) at (1.2,-2.9) {$8$};
		\node (v9) at (-0.4,1.7) {$9$};
		\node (v10) at (2.5,-2.1) {$10$};
	
		\node (v11) at (1.9,7.6) {$11$};
		\node (v12) at (-2.5,4.9) {$12$};
		\node (v13) at (-1, 4.2) {$13$};
		\node (v14) at (0,4.5) {$14$};
		\node (v15) at (-1.2,-1) {$15$};
		\node (v16) at (0,5.7) {$16$};
		\node (v17) at (1.2,6.9) {$17$};
		\node (v18) at (2.9,5.5) {$18$};
		\node (v19) at (-0.4,7.7) {$19$};

		\begin{scope}   
		\tikzstyle{every node}=[auto=right]   
		\draw [-triangle 45] (v2) to (v5);
		\draw [-triangle 45] (v15) to (v3);
		\draw [-triangle 45] (v2) to (v1);
		\draw [-triangle 45] (v7) to (v3);
		\draw [-triangle 45] (v5) to (v7);
		\draw [-triangle 45] (v5) to (v9);
		\draw [-triangle 45] (v7) to[bend right] (v5);
	\draw [-triangle 45] (v9) to (v2);
	\draw [-triangle 45] (v8) to (v4);
	\draw [-triangle 45] (v4) to (v6);
	\draw [-triangle 45] (v2) to (v15);
	\draw [-triangle 45] (v10) to (v7);
	\draw [-triangle 45] (v3) to (v8);
	\draw [-triangle 45] (v6) to (v2);
	\draw [-triangle 45] (v8) to (v10);
	\draw [-triangle 45] (v1) to (v12);
	\draw [-triangle 45] (v12) to(v2);
	\draw [-triangle 45] (v18) to (v17);

	\draw [-triangle 45] (v12) to [bend left ](v19);
	\draw [-triangle 45] (v12) to (v16);
	\draw [-triangle 45] (v13) to (v12);

	\draw [-triangle 45] (v16) to (v18);
	\draw [-triangle 45] (v18) to (v14);
	\draw [-triangle 45] (v14) to (v13);
	\draw [-triangle 45] (v17) to[bend right ] (v12);
	\draw [-triangle 45] (v19) to (v11);
	\draw [-triangle 45] (v11) to (v18);
		\end{scope}
		\end{tikzpicture}}
	\caption{A twinless strongly connected graph $G$. This graph contains two $2$-edges blocks $C_{1}=\left\lbrace 12,18\right\rbrace , C_{2}=\left\lbrace 2,7\right\rbrace $ and one $2$-edge-twinless block $C=\left\lbrace 12,18 \right\rbrace $. Since the vertices $2,7$ are not in the same twinless strongly connected component of $G\setminus\left\lbrace (3,8) \right\rbrace $, the set $C_{2}$ is not a $2$-edge-twinless block.}
	\label{fig:2edgetwinlessblocksexample}
\end{figure}
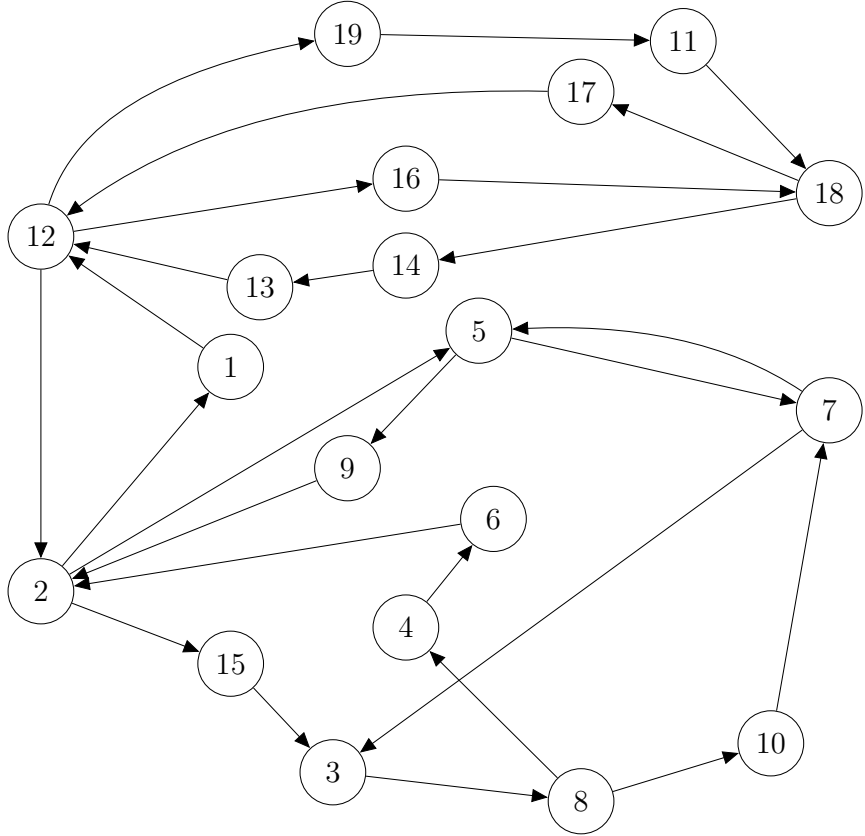

The concept of twinless strongly connected components was first introduced by	Raghavan \cite{SR06} in $2006$. Raghavan \cite{SR06} proved that twinless strongly connected components of a directed graph can be identified in linear time. The first linear time algorithm for testing $2$-vertex connectivity of directed graphs was described by Georgiadis \cite{G10} in $2010$. Italiano et al. \cite{ILS12,Italiano2010} gave linear time algorithms for determining all the strong articulation points and strong bridges of a directed graph. In $2014$, Jaberi \cite{J14} presented algorithms for computing the $2$-vertex-connected components of directed graphs in $O(nm)$ time (published in \cite{Jaberi16}). An experimental study ($2015$) \cite{LGILP15} showed that our  algorithm performs well in practice. Henzinger et al.\cite{HKL15}
introduced algorithms for calculating the $2$-vertex-connected components and $2$-edge-connected components of a directed graph in $O(n^{2})$ time. Jaberi \cite{Jaberi15} presented algorithms for computing the $2$-directed blocks, $2$-strong blocks, and $2$-edge blocks of a directed graph. Georgiadis et al. \cite{GILP14SODA} gave linear time algorithms for finding $2$-edge blocks. The same authors \cite{GILP14VertexConnectivity} gave linear time algorithms for calculating $2$-directed blocks and $2$-strong blocks. In $2019$, Jaberi \cite{Jaberi19} presented an algorithm for computing $2$-twinless-connected components. Jaberi \cite{Jaberi2019} also presented algorithms for computing $2$-twinless blocks.

The paper is organized as follows. In section \ref{def:section1} we present an algorithm for computing $2$-edge-twinless blocks in $O((b_{t}n+m)n)$ time, where $b_{t}$ is the number of twinless bridges.  We then study in section \ref{def:section2} the relation between $2$-edge-twinless blocks and $2$-edge block and we show that $2$-edge-twinless blocks can be computed in $O((b_{t}-b_{s}+n)m)$, where $b_{t}$ and $b_{s}$ are the number of twinless bridges and strong bridges, respectively. Finally, in section \ref{def:openproblems}, we pose some open problems.
\section{Computing $2$-edge-twinless blocks} \label{def:section1}
In this section we present an algorithm for computing the $2$-edge-twinless blocks of directed graphs.
The strongly connected components of a directed graph are disjoint and they can be found in linear time using Tarjan's algorithm \cite{T72}. Moreover, the twinless strongly connected components of a strongly connected graph can be identified in linear time using Raghavan's algorithm \cite{SR06}.
The following lemma shows that we need only consider twinless strongly connected graphs.
\begin{lemma}
Let $G=(V,E)$ be a strongly connected graph. Let $C^{t}$ be a $2$-edge-block of $G$. Then all the vertices of $C^{t}$ lie in the same twinless strongly connected component of $G$.  
\end{lemma}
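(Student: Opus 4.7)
The plan is to exploit the definition of a $2$-edge-twinless block for an arbitrarily chosen edge $e \in E$, and then lift the resulting conclusion from $G \setminus \{e\}$ back to $G$ itself.

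First, I would fix any edge $e \in E$. By the definition of the $2$-edge-twinless block $C^{t}$, every pair of distinct vertices $v,w \in C^{t}$ lies in the same twinless strongly connected component of $G \setminus \{e\}$. Since the twinless strongly connected components of a directed graph partition its vertex set (Raghavan \cite{SR06}), there is a single twinless strongly connected component $T$ of $G \setminus \{e\}$ with $C^{t} \subseteq T$.

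Second, I would transfer twinless strong connectivity from $G \setminus \{e\}$ to $G$. By Raghavan's characterization, $(G \setminus \{e\})[T]$ being twinless strongly connected yields a strongly connected spanning subgraph $H$ of $(G \setminus \{e\})[T]$ on vertex set $T$ that contains no pair of antiparallel (twin) edges. Now $G[T]$ is obtained from $(G \setminus \{e\})[T]$ by possibly adding the single edge $e$ (in case both endpoints of $e$ lie in $T$). Thus the same $H$ is a strongly connected spanning subgraph of $G[T]$ without twin edges, so $G[T]$ is also twinless strongly connected. Consequently $T$ is contained in a single twinless strongly connected component of $G$, and hence so is $C^{t}$.

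The main obstacle I anticipate is precisely this monotonicity step: one needs to be confident that twinless strong connectivity on a fixed vertex set is preserved when edges are added to the induced subgraph. Via the spanning-subgraph characterization above this is immediate, because a certificate in the sparser graph remains a certificate in the denser one; but if one prefers an alternative definition of twinless strong connectivity (e.g.\ the pairwise existence of a simple cycle of length at least three through both endpoints), an analogous monotonicity argument still works, since such cycles survive when edges are added. A side remark is that the hypothesis that $G$ is strongly connected, though natural from the algorithmic setup of this section, is not strictly used in the argument above.
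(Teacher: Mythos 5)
Your proof is correct, but it takes a genuinely different route from the paper's. The paper argues by contradiction: assuming two vertices of $C^{t}$ lie in distinct twinless strongly connected components of $G$, it invokes Raghavan's structure theorem \cite{SR06} --- contracting the twinless strongly connected components of a strongly connected graph yields a tree each of whose edges corresponds to a pair of antiparallel edges of $G$ --- to exhibit a single edge $e$ whose removal already separates the two vertices, contradicting the defining property of the block. Your argument is direct and more elementary: you fix one arbitrary edge $e$, apply the definition to place $C^{t}$ inside one twinless strongly connected component $T$ of $G\setminus\{e\}$, and then lift $T$ into a single twinless strongly connected component of $G$ by monotonicity (a twinless certificate in the sparser graph survives the addition of $e$). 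This buys you independence from the tree structure theorem, needing only the partition property and the spanning-subgraph characterization, and, as you observe, strong connectivity of $G$ is never used. What the paper's route buys in exchange is a literally stronger conclusion: the statement speaks of a ``$2$-edge-block,'' and the contradiction argument covers plain $2$-edge blocks (the exhibited edge destroys strong connectivity between the two vertices), whereas your opening step invokes the definition of a $2$-edge-\emph{twinless} block and therefore only covers that reading. Since the notation $C^{t}$ and the lemma's role in the paper (reducing the computation of $2$-edge-twinless blocks to twinless strongly connected components) both point to the twinless reading, this is a discrepancy in the statement's wording rather than a gap in your proof; just be aware that your argument does not establish the claim for ordinary $2$-edge blocks.
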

\begin{proof}
	Assume for the sake of contradiction that $C^{t}$ contains two vertices $v,w$ such that $v,w$ are in distinct twinless strongly connected components $C_{v},C_{w}$, respectively. If we contract every twinless strongly coneccted component of $G$ into a single vertex, we obtain a graph $G^{tss}$. By [\cite{SR06}, Theorem], the underlying graph of $G^{tss}$ is a tree and each edge corresponds to antiparallel edges of $G$. Therefore, there is an edge $e$ such that $G\setminus\left\lbrace e \right\rbrace $ is not strongly connected, a contradiction.
\end{proof}
The $2$-edge-twinless blocks of a strongly connected graph are the union of the $2$-edge-twinless blocks of its twinless strongly connected components. 

The next lemma shows that $2$-edge-twinless blocks have no vertices in common.
\begin{lemma}\label{def:2tbsAredisjoint}
	Let $G=(V,E)$ be a twinless strongly connected graph. The $2$-edge-twinless blocks of $G$ are disjoint.
\end{lemma}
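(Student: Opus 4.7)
The plan is to argue by maximality. I would suppose for contradiction that there exist two distinct $2$-edge-twinless blocks $C_1^{t}$ and $C_2^{t}$ of $G$ with $C_1^{t} \cap C_2^{t} \neq \emptyset$, and then show that $C_1^{t} \cup C_2^{t}$ also satisfies the defining property of a $2$-edge-twinless block. Since $|C_1^{t} \cup C_2^{t}| \geq |C_1^{t}| \geq 2$, this would contradict the maximality of $C_1^{t}$ (and of $C_2^{t}$), yielding disjointness.

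To carry out the verification, I would fix arbitrary distinct vertices $x, y \in C_1^{t} \cup C_2^{t}$ and an arbitrary edge $e \in E$, and show that $x$ and $y$ belong to the same twinless strongly connected component of $G \setminus \{e\}$. If both $x, y \in C_1^{t}$ or both $x, y \in C_2^{t}$, this is immediate from the definition of the corresponding block. The only nontrivial case is $x \in C_1^{t} \setminus C_2^{t}$ and $y \in C_2^{t} \setminus C_1^{t}$. Here I would fix some $v \in C_1^{t} \cap C_2^{t}$, which exists by assumption. By the defining property of $C_1^{t}$ applied to the pair $(x,v)$, the vertices $x$ and $v$ lie in the same twinless strongly connected component of $G \setminus \{e\}$; similarly, the defining property of $C_2^{t}$ applied to $(v,y)$ places $v$ and $y$ in the same twinless strongly connected component of $G \setminus \{e\}$.

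The key remaining ingredient is that, for any fixed directed graph $H$, the relation \emph{``lies in the same twinless strongly connected component of $H$''} is an equivalence relation on $V(H)$, since twinless strongly connected components partition the vertex set (as established in \cite{SR06}). Applying this to $H = G \setminus \{e\}$, transitivity yields that $x$ and $y$ lie in the same twinless strongly connected component of $G \setminus \{e\}$, completing the verification for $C_1^{t} \cup C_2^{t}$.

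There is no significant obstacle to this plan; the argument is essentially a standard equivalence-class closure argument, and the only point worth being careful about is invoking the fact that twinless strong connectivity gives an honest partition of the vertex set on each individual graph $G \setminus \{e\}$, so that transitivity across the common vertex $v$ is legitimate. Once this is noted, the contradiction with maximality closes the proof in one line.
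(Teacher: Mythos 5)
Your proof is correct and takes essentially the same route as the paper's: both arguments fix a common vertex $v \in C_1^{t} \cap C_2^{t}$ and use the fact that the twinless strongly connected components of $G\setminus\{e\}$ partition the vertex set (transitivity via [\cite{SR06}, Lemma 1]) to link $x$ and $y$ through $v$. The only cosmetic difference is that you phrase the contradiction as $C_1^{t}\cup C_2^{t}$ violating maximality, whereas the paper fixes a separating edge for a pair $x,y$ in different blocks and contradicts its existence; your formulation is marginally cleaner since it does not implicitly assume that two vertices lying in distinct blocks must be separated by some edge.
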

\begin{proof}
	Let $C^{t}_1,C^{t}_2$ be two distinct $2$-edge-twinless blocks of $G$. Suppose for the purpose of contradiction that $C^{t}_1 \cap C^{t}_2 \neq \emptyset$. Let $v\in C^{t}_1 \cap C^{t}_2$ and let $x\in C^{t}_1,y\in C^{t}_2$ with $x,y \notin C^{t}_1 \cap C^{t}_2$.  Since $x,y$ are not in the same $2$-edge-twinless block, there is an edge $e\in E$ such that $x,y$ are not in the same twinless strongly connected component of $G\setminus\left\lbrace e \right\rbrace $. The vertices $x,v$ lie in the same twinless strongly connected component of $G\setminus\left\lbrace e \right\rbrace $ because $C^{t}_1$ is a $2$-edge-twinless block. Furthermore, $v,y$ belong to the same twinless strongly connected component of $G\setminus\left\lbrace e \right\rbrace $.
	 By [\cite{SR06}, Lemma $1$] $x$ and $y$ are in the same twinless strongly connected component of $G\setminus\left\lbrace e\right\rbrace$. 
	
\end{proof}
\begin{lemma}\label{def:Relationbetween2tbAndtbs}
	Let $G=(V,E)$ be a twinless strongly connected graph and let $x,y$ be distinct vertices in $G$. Let $e$ be an edge in $G$ such that $e$ is not a twinless bridge. Then the vertices $x$ and $y$ lie in the same twinless strongly connected component of $G\setminus \lbrace e\rbrace$.
\end{lemma}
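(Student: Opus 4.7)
The plan is to argue that this lemma follows essentially by unpacking the definition of a twinless bridge, together with the fact that in a twinless strongly connected graph the entire vertex set is a single twinless strongly connected component.

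First, I would recall that an edge $e$ of a twinless strongly connected graph $G$ is called a \emph{twinless bridge} precisely when $G\setminus\{e\}$ fails to remain twinless strongly connected; equivalently, $e$ is not a twinless bridge exactly when $G\setminus\{e\}$ is itself twinless strongly connected. So the hypothesis of the lemma is that $G\setminus\{e\}$ is twinless strongly connected.

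Next, I would observe that if a directed graph $H$ is twinless strongly connected, then by definition every pair of vertices of $H$ lies in the (unique) maximal twinless strongly connected subset of $V(H)$, namely $V(H)$ itself. Applying this to $H=G\setminus\{e\}$ (which has the same vertex set as $G$), every pair of distinct vertices of $G$, and in particular the pair $x,y$, belongs to the same twinless strongly connected component of $G\setminus\{e\}$, which is exactly the claim.

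There is essentially no obstacle here; the only thing to be careful about is that the definition of a twinless bridge in use is the standard one, so that ``not a twinless bridge'' is indeed equivalent to ``$G\setminus\{e\}$ is twinless strongly connected.'' Once that is fixed, the lemma is immediate and requires no further combinatorial argument; it is really a restatement of what the hypothesis means, and it is included here so that the edge-removal analysis in the upcoming algorithm only has to branch on twinless bridges rather than on arbitrary edges.
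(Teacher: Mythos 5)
Your proof is correct and takes essentially the same route as the paper, which simply states that the lemma is ``Immediate from the definition.'' You have just spelled out the unpacking: $e$ not being a twinless bridge means $G\setminus\{e\}$ is twinless strongly connected, so its entire vertex set forms a single twinless strongly connected component containing both $x$ and $y$.
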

\begin{proof} Immediate from the definition.  
\end{proof}
Lemma \ref{def:2tbsAredisjoint} and Lemma \ref{def:Relationbetween2tbAndtbs} lead to an algorithm (Algorithm \ref{def:Relationbetween2tbAndtbs}) which might be helpful when the number of twinless bridges is small. 
\begin{figure}[htbp]
	\begin{myalgorithm}\label{algo:algor2forall2edgetwinlessblocks}\rm\quad\\[-5ex]
		\begin{tabbing}
			\quad\quad\=\quad\=\quad\=\quad\=\quad\=\quad\=\quad\=\quad\=\quad\=\kill
			\textbf{Input:} A twinless strongly connected graph $G=(V,E)$.\\
			\textbf{Output:} The $2$-edge-twinless blocks of $G$.\\
			{\small 1}\> \textbf{If} $G$ is $2$-edge-twinless connected \textbf{then}.\\
			{\small 2}\>\> Output $V$.\\
			{\small 3}\> \textbf{else}\\
			{\small 4}\>\> Let $A$ be an $n\times n$ matrix.\\
			{\small 5}\>\> Initialize $A$ with $1$s.\\
			{\small 6}\>\> compute the twinless bridge of $G$.\\
			{\small 7}\>\> \textbf{for} each twinless bridge $e$ of $G$ \textbf{do} \\
			{\small 8}\>\>\> \textbf{for} each pair $(v,w) \in V\times V$ \textbf{do} \\
			{\small 9}\>\>\>\> \textbf{if} $v,w$ in distinct twinless strongly connected components of $G\setminus \lbrace e\rbrace$ \textbf{then}\\
			{\small 10}\>\>\>\>\> $A[v,w] \leftarrow 0$. \\
			{\small 11}\>\> $E^{t} \leftarrow \emptyset$. \\
			{\small 12}\>\> \textbf{for} each pair $(v,w) \in V\times V $ \textbf{do} \\
			{\small 13}\>\>\> \textbf{if} $A[v,w]=1$ and $A[w,v]=1$ \textbf{then} \\
			{\small 14}\>\>\>\> Add the undirected edge $(v,w)$ to $E^{t}$. \\
			{\small 15}\>\> Calculate the connected components of size $>1$ of $G^{t}$ and output them.
		\end{tabbing}
	\end{myalgorithm}
\end{figure}
\begin{Theorem}\label{def:RunTimeAlgor2ForAll2EdgetwinlessBlocks}
	Algorithm \ref{algo:algor2forall2edgetwinlessblocks} runs in $O((b_{t}n+m)n)$ time, where $b_{t}$ is the number of twinless bridges.
\end{Theorem}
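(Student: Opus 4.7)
My plan is to perform a line-by-line running-time analysis of the algorithm and then to aggregate the contributions.

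First I would handle the preparatory steps. Lines 4--5 initialize an $n \times n$ matrix in $O(n^2)$ time. In line 6 the set of twinless bridges of $G$ is computed; this can be carried out in $O(n+m)$ time by combining Italiano et al.'s linear-time algorithm \cite{ILS12,Italiano2010} for strong bridges with Raghavan's linear-time algorithm \cite{SR06} for twinless strongly connected components. By Lemma \ref{def:Relationbetween2tbAndtbs}, only twinless bridges can separate vertices into different twinless strongly connected components after edge deletion, so restricting the outer loop at line 7 to these edges preserves correctness while cutting the number of iterations from $m$ down to $b_t$.

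The dominant cost is incurred in the outer loop of lines 7--10. Each of its $b_t$ iterations invokes Raghavan's algorithm on $G \setminus \{e\}$ to obtain the twinless strongly connected components in $O(n+m)$ time, and then sweeps over all $n^2$ ordered pairs $(v,w)$ to clear the entries of $A$ whose endpoints fall in distinct components. This contributes $O(b_t(n^2 + n + m)) = O(b_t(n^2 + m))$ in total.

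The closing phase is cheap: lines 12--14 scan the $n^2$ entries of $A$ and populate the auxiliary undirected graph $G^t$, and line 15 extracts the connected components of $G^t$ by a single traversal, running in $O(n^2)$ overall since $G^t$ has $n$ vertices and at most $O(n^2)$ edges. Summing all contributions yields
\[
O(n^2) + O(n+m) + O(b_t(n^2+m)) + O(n^2) \;=\; O(b_t n^2 + b_t m + n^2).
\]
Since $m \leq n^2$ we have $b_t m = O(b_t n^2)$, and since any twinless strongly connected graph satisfies $m \geq n$ the residual $n^2$ term is bounded by $mn$; the total therefore collapses to $O(b_t n^2 + mn) = O((b_t n + m)n)$, as claimed. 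The only step needing nontrivial algorithmic input is the linear-time extraction of twinless bridges in line 6; once that ingredient is granted, together with the licence provided by Lemma \ref{def:Relationbetween2tbAndtbs} to ignore non-bridge edges, the remainder of the argument is routine arithmetic.
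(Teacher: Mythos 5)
Your overall line-by-line accounting matches the paper's proof, and the final arithmetic leading to $O(b_t n^2 + mn) = O((b_t n + m)n)$ is sound. However, there is one genuinely wrong step: your claim that the twinless bridges in line 6 can be computed in $O(n+m)$ time ``by combining'' the linear-time strong-bridge algorithm of Italiano et al.\ with Raghavan's linear-time algorithm for twinless strongly connected components. No such combination is known to work. Every strong bridge is a twinless bridge, but the difficult part is detecting the twinless bridges that are \emph{not} strong bridges, i.e.\ edges $e$ for which $G\setminus\{e\}$ remains strongly connected yet fails to be twinless strongly connected; running Raghavan's algorithm once on $G$ tells you nothing about which single edge deletions split a twinless strongly connected component. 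Indeed, the paper explicitly lists linear-time computation of twinless bridges as an \emph{open problem} in Section \ref{def:openproblems}, and its own proof of this theorem instead invokes the $O(nm)$-time algorithm of Jaberi \cite{Jaberi19} for this step.

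The good news is that this error does not affect the stated bound: replacing your $O(n+m)$ claim with the correct $O(nm)$ cost for line 6 still gives a total of $O(nm + b_t(n^2+m) + n^2)$, which is absorbed by $O(b_t n^2 + mn) = O((b_t n + m)n)$. So the theorem survives, but you should cite the $O(nm)$ twinless-bridge algorithm rather than assert an unproved linear-time procedure, since your closing remark that the whole argument hinges on ``the linear-time extraction of twinless bridges'' rests that hinge on a claim that is not available.
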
 
\begin{proof}
 the twinless strongly connected components of a strongly connected graph can be calculated in linear time using Raghavan's algorithm \cite{SR06}. Jaberi \cite{Jaberi19} shows that twinless bridges can br computed in $O(nm)$ time. Moreover, Lines $7$--$11$ take $O(b_{t} n^{2})$ time.
\end{proof}
\section{The relationship between $2$-edge-twinless blocks and $2$-edge-blocks} 
\label{def:section2}
In this section we discuss the relationship between $2$-edge blocks and $2$-edge-twinless blocks. 
The flowing lemma shows that each $2$-edge-twinless block is a subset of a $2$-edge block.
\begin{lemma}\label{def:relationshipbetween2e2tblocks}
Let $G=(V,E)$ be a twinless strongly connected graph and let $C_{t}$ be a $2$-edge twinless block in $G$. Then all the vertices of $C^{t}$ are in the same $2$-edge block of $G$.
\end{lemma}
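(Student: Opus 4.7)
The plan is to reduce the statement to the observation that the twinless strongly connected components of any digraph refine its ordinary strongly connected components, and then invoke the edge-Menger characterisation of $2$-edge blocks quoted in the introduction.

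First I would fix two arbitrary distinct vertices $v,w\in C^{t}$ and any edge $e\in E$. Because $C^{t}$ is a $2$-edge-twinless block, $v$ and $w$ lie in a common twinless strongly connected component of $G\setminus\{e\}$. The key remark is that twinless strong connectivity is stronger than strong connectivity: the twinless SCC of any digraph that contains a vertex $x$ is contained in the ordinary SCC of $x$. Hence $v$ and $w$ also lie in the same strongly connected component of $G\setminus\{e\}$. Since $e$ was arbitrary, this holds for every edge of $G$.

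Next I would appeal to the characterisation recalled in the introduction (Menger's theorem for edge connectivity): two vertices are connected by two edge-disjoint paths in each direction in $G$ if and only if they remain in the same SCC after removing any single edge. So every pair in $C^{t}$ is $2$-edge-connected in both directions. The relation ``$x$ and $y$ lie in the same SCC of $G\setminus\{e\}$ for every $e\in E$'' is the intersection of the equivalence relations ``same SCC of $G\setminus\{e\}$'', hence itself an equivalence relation on $V$; its equivalence classes of size at least two are exactly the $2$-edge blocks of $G$. Therefore the whole set $C^{t}$, being contained in one such equivalence class, is contained in a single $2$-edge block, which is what we had to prove.

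I do not anticipate a real obstacle here: once the refinement ``twinless SCC $\subseteq$ SCC'' is stated cleanly, the rest is a direct application of the Menger-style definition of $2$-edge blocks and the fact that the $2$-edge blocks partition the vertices that participate in any such block. The only point that deserves a brief justification in the write-up is precisely that refinement, which follows immediately from the fact that a twinless strongly connected subgraph is, in particular, strongly connected.
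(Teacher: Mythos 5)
Your proposal is correct and follows essentially the same route as the paper's own proof: fix a pair $v,w\in C^{t}$ and an arbitrary edge $e$, use the definition to place them in a common twinless strongly connected component of $G\setminus\{e\}$, observe that this component lies inside a single strongly connected component of $G\setminus\{e\}$, and conclude membership in a common $2$-edge block. The only difference is that you spell out the equivalence-relation bookkeeping at the end, which the paper leaves implicit.
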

\begin{proof}
		Let $v$ and $w $ be distinct vertices in $C_{t}$ and let $e$ be an edge in $G$. By definition, the vertices $v,w$ are in the same twinless strongly connected component of $G\setminus\lbrace e\rbrace$.
		Let $C$ be the twinless strongly connected component of $G\setminus\lbrace e\rbrace$ that contains $v,w$. Clearly, all the vertices of $C$ lie in the same strongly connected component of $G\setminus\lbrace e\rbrace$. Consequently, $v,w$ are in the same $2$-strong block in $G$.
\end{proof}
The next lemma demonstrates which vertices are important when we compute $2$-edge-twinless blocks in $2$-edge-connected graphs.
\begin{lemma}\label{def:2edgeblocks2edgetwinlessblockstwinlessbridges}
	Let $G=(V,E)$ be a $2$-edge connected graph. Let $x,y$ be two distinct vertices in $V$ such that $x,y$ are not in the same $2$-edge-twinless block in $G$. Then $G$ contains a twinless bridge $e$ such that $e$ is not a strong bidge of $G$ and the vertices $x,y$ do not lie in the same twinless strongly connected component of $G\setminus\lbrace e\rbrace$.  
\end{lemma}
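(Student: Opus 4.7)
The plan is to chain together the definitions of $2$-edge-twinless block, $2$-edge connectivity, and the auxiliary Lemma~\ref{def:Relationbetween2tbAndtbs} that was just proved. Since the hypothesis says $x$ and $y$ are not in the same $2$-edge-twinless block, and since by Lemma~\ref{def:2tbsAredisjoint} the $2$-edge-twinless blocks partition their union so there is no ambiguity, the definition of a $2$-edge-twinless block directly yields some edge $e \in E$ witnessing the failure of the required property, namely: $x$ and $y$ are not in the same twinless strongly connected component of $G \setminus \{e\}$.

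First I would extract such an $e$ from the definition. Then I would show $e$ must be a twinless bridge: indeed, Lemma~\ref{def:Relationbetween2tbAndtbs} asserts the contrapositive, that if $e$ is not a twinless bridge then every pair of vertices remains in a common twinless strongly connected component of $G \setminus \{e\}$. Since $x,y$ do \emph{not} share such a component after deleting $e$, the edge $e$ must be a twinless bridge.

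Next I would use the assumption that $G$ is $2$-edge connected to rule out that $e$ is a strong bridge. By definition of $2$-edge connectivity, $G$ has no strong bridge whatsoever, so the removal of any single edge (in particular $e$) leaves $G$ strongly connected. Therefore $e$ is a twinless bridge that is not a strong bridge, and the very same $e$ already witnesses that $x,y$ are separated in the twinless strong connectivity sense after removal, completing the three required conditions.

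I do not foresee any significant obstacle: the argument is essentially a bookkeeping combination of the preceding lemma and the hypothesis. The only subtle point is that one must explicitly invoke $2$-edge connectivity of $G$ to promote ``$e$ is a twinless bridge'' to ``$e$ is a twinless bridge that is not a strong bridge''; without this, the edge $e$ extracted from the definition could a priori be a strong bridge, which would not match the statement's conclusion.
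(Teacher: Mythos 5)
Your proposal is correct and follows essentially the same route as the paper's proof: extract a separating edge $e$ from the definition of $2$-edge-twinless block, observe that $e$ must be a twinless bridge (you via the contrapositive of Lemma~\ref{def:Relationbetween2tbAndtbs}, the paper directly from the definition), and invoke $2$-edge connectivity to rule out that $e$ is a strong bridge. The appeal to Lemma~\ref{def:2tbsAredisjoint} is unnecessary but harmless.
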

\begin{proof}
	Since the vertices $x,y$ are not in the same $2$-edge-twinless block in $G$, there is an edge $(v,w)\in E$ such that $x,y$ are in distinct twinless strongly connected components of $G\setminus\left\lbrace (v,w)\right\rbrace $. Moreover, $(v,w)$ is a twinless bridge because the graph $G\setminus\left\lbrace (v,w)\right\rbrace $ is not twinless strongly connected. But the edge $(v,w)$ is not a strong bridge because $G$ is $2$-edge-connected. 
\end{proof}

Algorithm \ref{algo:analgor2forall2edgetwinlessblock} describes an algorithm for computing $2$-edge-twinless blocks.
\begin{figure}[htbp]
	\begin{myalgorithm}\label{algo:analgor2forall2edgetwinlessblock}\rm\quad\\[-5ex]
		\begin{tabbing}
			\quad\quad\=\quad\=\quad\=\quad\=\quad\=\quad\=\quad\=\quad\=\quad\=\kill
			\textbf{Input:} A twinless strongly connected graph $G=(V,E)$.\\
			\textbf{Output:} The $2$-edge-twinless blocks of $G$.\\
			{\small 1}\> \textbf{If} $G$ is $2$-edge-twinless connected \textbf{then}.\\
			{\small 2}\>\> Output $V$.\\
			{\small 3}\> \textbf{else}\\
			{\small 4}\>\> identify the strong bridges of $G$.\\
			{\small 5}\>\> Compute the twinless bridges of $G$.\\
			{\small 6}\>\> Calculate the $2$-edge blocks of $G$\\
			{\small 7}\>\> \textbf{for} each twinless bridge $e$ of $G$ \textbf{do} \\
			{\small 8}\>\>\> \textbf{if} $e$ is not a strong bridge of $G$ \textbf{then}\\
			{\small 9}\>\>\>\> identify the twinless strongly connected components of $G\setminus\left\lbrace e \right\rbrace $.\\
			{\small 10}\>\>\>\> refine the $2$-edge blocks.
		\end{tabbing}
	\end{myalgorithm}
\end{figure}

\begin{lemma}
	Algorithm \ref{algo:analgor2forall2edgetwinlessblock} calculates $2$-edge-twinless blocks.
\end{lemma}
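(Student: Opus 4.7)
The plan is to show that two vertices $v,w$ end up in the same class output by Algorithm \ref{algo:analgor2forall2edgetwinlessblock} if and only if they lie in the same $2$-edge-twinless block of $G$. The trivial branch at lines $1$--$2$ is correct by definition of $2$-edge-twinless connectivity, so I focus throughout on the refinement branch at lines $4$--$10$.

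For the forward direction, suppose $v,w$ lie in a common $2$-edge-twinless block. By Lemma \ref{def:relationshipbetween2e2tblocks} they lie in a common $2$-edge block, hence they begin together in the partition computed at line $6$. For every twinless bridge $e$ handled by the loop, the defining property of a $2$-edge-twinless block forces $v,w$ to share a twinless strongly connected component of $G\setminus\{e\}$, so the refinement at line $10$ never splits them.

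For the converse, assume the algorithm leaves $v,w$ together; I must verify they share a twinless strongly connected component of $G\setminus\{e\}$ for every edge $e$. Since $v,w$ share a $2$-edge block, they already share a strongly connected component of $G\setminus\{e\}$ for every $e$. If $e$ is not a twinless bridge, Lemma \ref{def:Relationbetween2tbAndtbs} gives the conclusion immediately. If $e$ is a twinless bridge that is not a strong bridge, the loop processed $e$ explicitly and the fact that $v,w$ survived the refinement shows they lie in the same twinless strongly connected component of $G\setminus\{e\}$.

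The remaining case, where $e$ is a twinless bridge that is also a strong bridge of $G$, is the main obstacle, because line $8$ instructs the algorithm to skip such edges. My plan is to establish an auxiliary structural claim that neutralises this case: whenever a strong bridge $e$ of $G$ separates two vertices sharing a $2$-edge block into distinct twinless strongly connected components of $G\setminus\{e\}$, there already exists a twinless bridge $e'$ of $G$ that is not a strong bridge and that separates the same two vertices in $G\setminus\{e'\}$. This would close the argument, since the hypothesis that no non-strong twinless bridge separates $v,w$ then also precludes strong-and-twinless bridges from separating them. To prove the claim I would apply the tree description of twinless strongly connected components from [\cite{SR06}, Theorem] to the strongly connected component of $G\setminus\{e\}$ that contains $v$ and $w$; an antiparallel pair realising the twinless separation inside that subgraph lifts to an edge of $G$ whose status as a non-strong twinless bridge should follow from $G$ being twinless strongly connected while $G\setminus\{e\}$ already distinguishes $v$ from $w$ in the twinless sense.
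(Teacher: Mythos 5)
Your case decomposition is the right one, and the first three cases --- the trivial branch at lines $1$--$2$, edges that are not twinless bridges (Lemma \ref{def:Relationbetween2tbAndtbs}), and twinless bridges that are not strong bridges, which the loop at lines $7$--$10$ handles explicitly --- are fine and match the skeleton of the paper's argument, which simply cites Lemma \ref{def:relationshipbetween2e2tblocks} and Lemma \ref{def:2edgeblocks2edgetwinlessblockstwinlessbridges}. The problem is the fourth case, which you correctly isolate as the only real content of the lemma: an edge $e$ that is simultaneously a twinless bridge and a strong bridge is skipped by line $8$, so one must rule out that such an $e$ twinless-separates two vertices of a common $2$-edge block while no non-strong twinless bridge does. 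Your ``auxiliary structural claim'' is exactly the statement needed, but you do not prove it: the last sentence replaces the argument with ``should follow from'', and the lifting step you gesture at is precisely where the difficulty sits. The antiparallel pair $(a,b),(b,a)$ produced by Raghavan's tree decomposition lives inside the strongly connected component $S$ of $G\setminus\{e\}$ containing $v$ and $w$; once $e$ and the vertices outside $S$ are restored, you must show both (i) that removing, say, $(a,b)$ from $G$ leaves $G$ strongly connected, so that $(a,b)$ is not itself a strong bridge, and (ii) that $v$ and $w$ remain twinless-separated in $G\setminus\{(a,b)\}$, even though $e$ and the part of $G$ outside $S$ may now supply additional routes between the two sides of the tree edge. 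Neither point is addressed, so the claim stands as an unproven assertion and the proof is incomplete.

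For what it is worth, you have put your finger on the weakest joint of the paper's own one-line proof: Lemma \ref{def:2edgeblocks2edgetwinlessblockstwinlessbridges} disposes of this case only under the extra hypothesis that $G$ is $2$-edge-connected, where strong bridges do not exist at all, whereas Algorithm \ref{algo:analgor2forall2edgetwinlessblock} is run on a graph that is merely twinless strongly connected. So the case you single out genuinely needs an argument; but identifying the right claim is not the same as establishing it, and as written your proposal does not prove the lemma.
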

\begin{proof}
This follows from Lemma \ref{def:relationshipbetween2e2tblocks} and Lemma \ref{def:2edgeblocks2edgetwinlessblockstwinlessbridges}.
\end{proof}

\begin{Theorem}
The running time of Algorithm \ref{algo:analgor2forall2edgetwinlessblock} is $O((b_{t}-b_{s}+n)m)$, where $b_{t}$ and $b_{s}$ are the number of twinless bridges and strong bridges, respectively. 
\end{Theorem}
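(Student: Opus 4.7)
The plan is to simply add up the costs of the distinct ingredients of Algorithm \ref{algo:analgor2forall2edgetwinlessblock} and show that each one fits within the claimed budget. Before starting, I would record the key relationship $b_{s}\le b_{t}$: every strong bridge $e$ is also a twinless bridge, because $G\setminus\{e\}$ fails to be strongly connected and therefore cannot be twinless strongly connected either. Hence $b_{t}-b_{s}\ge 0$ counts precisely the twinless bridges that are \emph{not} strong bridges, which is exactly the set of edges iterated over in the main \textbf{for}-loop (lines $7$--$10$).

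Next I would account for the preprocessing in lines $1$--$6$. Testing whether $G$ is $2$-edge-twinless connected is equivalent to checking that $G$ has no twinless bridge, so it falls under the twinless-bridge computation. The strong bridges of $G$ can be identified in $O(n+m)$ by the algorithm of Italiano et al.\ \cite{ILS12,Italiano2010}; the $2$-edge blocks can be computed in $O(n+m)$ by Georgiadis et al.\ \cite{GILP14SODA}; and by Jaberi \cite{Jaberi19} the twinless bridges can be determined in $O(nm)$ time. Thus the preprocessing contributes $O(nm)$.

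Then I would bound the main loop. For each of the $b_{t}-b_{s}$ twinless bridges $e$ that is not a strong bridge, line $9$ computes the twinless strongly connected components of $G\setminus\{e\}$ in linear time $O(n+m)$ using Raghavan's algorithm \cite{SR06}, and line $10$ refines the current partition into $2$-edge blocks according to the twinless SCC membership on the vertex set, which can be implemented by a single pass costing $O(n+m)$ (for instance, by scanning the vertices of each current block and splitting it by its twinless-SCC label in $G\setminus\{e\}$). Summing over all selected edges yields $O((b_{t}-b_{s})(n+m))$.

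Combining the preprocessing and the loop, and using $m\ge n-1$ for a twinless strongly connected graph, the total cost is
\[
O(nm) + O\bigl((b_{t}-b_{s})(n+m)\bigr) = O\bigl((b_{t}-b_{s}+n)\,m\bigr),
\]
which establishes the theorem. The only step that requires care is the refinement in line $10$: one must make sure that across the $b_{t}-b_{s}$ iterations the total refinement work stays within $O((b_{t}-b_{s})m)$ rather than blowing up multiplicatively in the number of blocks, and I expect this to be the main (though minor) obstacle; using a per-iteration labelling scheme that touches each vertex and edge only once suffices. \qed
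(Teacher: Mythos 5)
Your proposal is correct and follows essentially the same route as the paper: it charges the preprocessing (strong bridges and $2$-edge blocks in linear time via \cite{ILS12,GILP14SODA}, twinless bridges in $O(nm)$ via \cite{Jaberi19}) and then a linear-time twinless-SCC computation per non-strong-bridge twinless bridge, summing to $O((b_{t}-b_{s}+n)m)$. You are in fact somewhat more careful than the paper's own proof, which leaves implicit both the observation that every strong bridge is a twinless bridge (so the loop runs $b_{t}-b_{s}$ times) and the cost of the refinement in line $10$.
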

\begin{proof}
The $2$-edge blocks of a directed graph can be calculated in linear time using the algorithm of Georgiadis et al. \cite{GILP14SODA}. Italiano et al. \cite{ILS12,FILOS12} presented  linear time algorithms for calculating all the strong bridges of a directed graph. Jaberi \cite{Jaberi19} proved that the twinless bridges can be computed in $O(nm)$ time. The twinless strongly connected components of a directed graph can be found in linear time using Raghavan's algorithm \cite{SR06}. Line $9$ can be implemented in linear time by using a similar idea to [\cite{GILP14SODA}, Lemma $3.2$]).
\end{proof}
Italiano et al. \cite{ILS12} proved that the number of strong bridges is at most $2n-2$. Let $G$ be a twinless strongly connected graph. Jaberi \cite{Jaberi19} described how we can obtain a twinless strongly connected subgraph from a strongly connected subgraph in a twinless strongly connected graph without increasing the number of its edges. This means that the number of twinless bridges is at most $2n-2$.
\section{Open Problems} \label{def:openproblems}
We leave as open problem whether the $2$-edge-twinless blocks of a directed graph can be identified in linear time. Another open question is whether the twinless bridges of a directed graph can be calculated in linear time.

Let $G=(V,E)$ be a directed grpah. A $k$-edge-twinless block is a maximal vertex set $U\subseteq V$ such that for any distinct vertices $v,w \in U$ and for each edge subset $L\subseteq E$ with $|L|<K$, the vertices $v,w$ belong to the same twinless strongly connected component of $G\setminus L$.
\begin{lemma}
	The $k$-edge-twinless blocks of a directed graph are disjoint
\end{lemma}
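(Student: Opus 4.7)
The plan is to adapt the proof of Lemma~\ref{def:2tbsAredisjoint} by replacing the deletion of a single edge with the deletion of an edge subset $L$ of size less than $k$. Suppose for contradiction that $U_1$ and $U_2$ are two distinct $k$-edge-twinless blocks of $G$ that share a vertex $v$. Because each is maximal with respect to the defining property, neither is contained in the other, so one can pick $x \in U_1 \setminus U_2$ and $y \in U_2 \setminus U_1$.

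The contradiction will come from showing that $U_1 \cup U_2$ itself satisfies the defining property, contradicting the maximality of $U_1$ (and of $U_2$). So fix any edge subset $L \subseteq E$ with $|L| < k$, and take two distinct vertices $a,b \in U_1 \cup U_2$. If $a,b$ both lie in $U_1$, or both in $U_2$, the conclusion is immediate from the respective block's defining property applied to $L$. For the mixed case, say $a \in U_1 \setminus U_2$ and $b \in U_2 \setminus U_1$; then $a$ and $v$ lie in the same twinless strongly connected component of $G \setminus L$ (since $a,v \in U_1$), and $v$ and $b$ do as well (since $v,b \in U_2$). Transitivity, invoked in Lemma~\ref{def:2tbsAredisjoint} via [\cite{SR06}, Lemma~1], yields that $a$ and $b$ also lie in a common twinless strongly connected component of $G \setminus L$.

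Since $L$ was arbitrary, $U_1 \cup U_2$ satisfies the defining property; being strictly larger than each of $U_1$ and $U_2$ thanks to the chosen $x,y$, this contradicts their maximality. I do not expect any genuine obstacle: the argument is uniform in $k$ and transfers the $k=2$ case line for line, with the single edge $e$ replaced by the subset $L$. The only point that deserves a note is the transitivity of the relation ``lying in the same twinless strongly connected component of $G\setminus L$'', which is exactly the equivalence-relation property used in the $k=2$ proof, now applied to the graph $G\setminus L$ for arbitrary $L$ with $|L|<k$.
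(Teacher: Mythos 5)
Your proof is correct and matches the paper's intent exactly: the paper's own proof is just ``Similar to the proof of Lemma~\ref{def:2tbsAredisjoint},'' and you carry out precisely that adaptation, with the single deleted edge replaced by a set $L$ of fewer than $k$ edges and the same appeal to transitivity of membership in a twinless strongly connected component via [\cite{SR06}, Lemma~1]. Your packaging of the contradiction through the maximality of $U_1\cup U_2$ (rather than through the witness edge set separating $x$ from $y$, as in the paper's Lemma~\ref{def:2tbsAredisjoint}) is a minor and arguably cleaner reorganization, not a different argument.
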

\begin{proof}
Similar to the proof of Lemma \ref{def:2tbsAredisjoint}.
\end{proof}
We leave as open problem whether $k$-edge-twinless blocks can be calculated efficiently.
	
\end{document}